\newtheorem{thm}{Theorem}
\newtheorem{proof}{Proof}
\newtheoremstyle{test}
  {\topsep}   
  {\topsep}   
  {\normalfont}  
  {0pt}       
  {\bfseries} 
  {.}         
  {\newline} 
  {}          
\theoremstyle{test}
\definecolor{light-gray}{gray}{0.95} 
\definecolor{darkgreen}{rgb}{0.0, 0.26, 0.15}
\lstdefinelanguage{HOL4}{
  morekeywords=[1]{Define, Theorem, Proof, QED, Definition, End},
  morekeywords=[2]{Type, Set, true, false, option},
  morekeywords=[3]{def, nltac, nlexplain},
  morecomment=[s]{(*}{*)},
  sensitive=true,
  basicstyle=\small\ttfamily,
  commentstyle=\color{gray},
  identifierstyle={\ttfamily\color{black}},
  keywordstyle=[1]{\ttfamily\color{violet}\bfseries},
  keywordstyle=[2]{\ttfamily\color{blue}},
  keywordstyle=[3]{\ttfamily\color{OliveGreen}\bfseries},
  escapeinside={|}{|}
}
\newcommand*{\HOLi}[1]{\lstinline[language=HOL4,breaklines=true]{#1}}
\newcommand*{\SEMPRE}[1]{\lstinline[language=HOL4]{$#1}}
\newcommand*{\Tool}{Lassie}
\begin{document}

\title{Lassie: HOL4 Tactics by Example}
\author{Heiko Becker}
\affiliation{%
  \institution{MPI-SWS,\\ Saarland Informatics Campus (SIC)}
  \country{Germany}
}
\email{hbecker@mpi-sws.org}
\author{Nathaniel Bos}
\authornote{Nathaniel Bos was supported by a DAAD RISE Internship.}
\affiliation{%
  \institution{McGill University}
  \country{Canada}
}
\email{nathaniel.bos@mail.mcgill.ca}
\author{Ivan Gavran}
\affiliation{%
  \institution{MPI-SWS}
  \country{Germany}
}
\email{gavran@mpi-sws.org}
\author{Eva Darulova}
\affiliation{%
  \institution{MPI-SWS}
  \country{Germany}
}
\email{eva@mpi-sws.org}
\author{Rupak Majumdar}
\affiliation{%
  \institution{MPI-SWS}
  \country{Germany}
}
\email{rupak@mpi-sws.org}

\begin{abstract}
Proof engineering efforts using interactive theorem proving have
yielded several impressive projects in software systems and mathematics.
A key obstacle to such efforts is the requirement that the domain expert
is also an expert in the low-level details in constructing the proof in a
theorem prover. In particular, the user needs to select a sequence
of tactics that lead to a successful proof, a task that in general requires
knowledge of the exact names and use of a large set of tactics.

We present Lassie, a tactic framework for the HOL4 theorem prover that allows
individual users to define their own tactic language \emph{by example} and 
give frequently used tactics or tactic combinations easier-to-remember names.
The core of Lassie is an extensible semantic parser, which allows the user to interactively
extend the tactic language through a process of definitional generalization.
Defining tactics in Lassie thus does not require any knowledge in implementing custom tactics,
while proofs written in Lassie retain the correctness guarantees provided by the HOL4 system.
We show through case studies how Lassie can be used in small and larger proofs by
novice and more experienced interactive theorem prover users, and how we envision it to ease the learning curve
in a HOL4 tutorial.

\keywords{Interactive Theorem Proving \and HOL4 \and Semantic Parsing.}


\end{abstract}

\begin{CCSXML}
<ccs2012>
   <concept>
       <concept_id>10011007.10011074.10011099.10011692</concept_id>
       <concept_desc>Software and its engineering~Formal software verification</concept_desc>
       <concept_significance>500</concept_significance>
       </concept>
   <concept>
       <concept_id>10011007.10011006.10011050.10011056</concept_id>
       <concept_desc>Software and its engineering~Programming by example</concept_desc>
       <concept_significance>500</concept_significance>
       </concept>
   <concept>
       <concept_id>10011007.10011006.10011050.10011055</concept_id>
       <concept_desc>Software and its engineering~Macro languages</concept_desc>
       <concept_significance>500</concept_significance>
       </concept>
 </ccs2012>
\end{CCSXML}

\ccsdesc[500]{Software and its engineering~Formal software verification}
\ccsdesc[500]{Software and its engineering~Programming by example}
\ccsdesc[500]{Software and its engineering~Macro languages}

\keywords{Interactive Theorem Proving, HOL4, Semantic Parsing, Tactic Programming}

\maketitle

\section{Introduction}\label{sec:intro}

Interactive theorem proving is increasingly replacing ``pen-and-paper''
correctness proofs in domains such as compilers~\cite{Compcert,CakeML},
operating system kernels~\cite{seL4}, and
formalized mathematics~\cite{Flyspeck,gonthier2008formal}.
Interactive theorem provers (ITPs) provide strong guarantees: all proof steps are
formalized and machine-checked by a kernel using only a small set of generally
accepted proof rules.

These guarantees come at a cost. Writing proofs in an ITP requires
both domain expertise in the target research area as well as in the particulars
of the interactive theorem prover.
Formally proving a theorem requires an expert to manually translate the general
high-level proof idea from a pen-and-paper proof into detailed, low-level kernel
proof steps, which makes writing formal proofs tedious and time-consuming.
Theorem provers thus provide tactic languages that allow to programmatically
combine low-level proof steps~\cite{ssreflect,Ltac,Eisbach,wenzel2006isabelle}.
While this makes proofs less tedious, users need to build up a vocabulary of
appropriate tactics, which constitutes a steep learning curve for novice ITP
users.

Controlled natural language interfaces~\cite{Mizar2015,NaprocheSAD} have been
explored as an alternative, more intuitive interface to an ITP. However, these
systems do not allow a combination with a general tactic language and are thus
constrained to a specific subset of proofs.

In this paper, we present the tactic framework \emph{\Tool{}} that allows HOL4
users to define their own tactic language on top of the existing ones
\emph{by example}, effectively providing an individualized interface.
Each example consists of the to-be-defined tactic (a natural language expression, called \emph{utterance})
and its definition using
existing HOL4 tactics with concrete arguments.

For instance, we can define
\begin{lstlisting}[language=HOL4, mathescape=true]
  instantiate 'x' with '$\top$'
\end{lstlisting}
as
\begin{lstlisting}[language=HOL4, mathescape=true]
  qpat_x_assum 'x' (qspec_then '$\top$' assume_tac)
\end{lstlisting}
Newly defined \Tool{} tactics map directly and transparently to the underlying
HOL4 tactics, and can be freely combined.

The main novelty to existing tactic languages is that Lassie allows to define
tactics by example and thus does not require knowledge in tactic programming.
A tactic defined by example is automatically \emph{generalized} into a parametric
tactic by \Tool{} to make the tactic applicable in different contexts, making
\Tool{} go beyond a simple macro system.

Our key technical contribution is that Lassie realizes this definition-by-example
using an extensible semantic parser~\cite{berant2013freebase,DBLP:conf/acl/WangGLM17}.
\Tool{} tactics are defined as grammar rules that map to HOL4 tactics.
\Tool{} starts with an initial core grammar that is gradually extended through user-provided examples.
For each example, the semantic parser finds matchings between the utterance and its definition.
These matchings are used to create new rules for the grammar.
Effectively, the semantic parser identifies the parameters of the newly given
command, and thus generalizes from the given example.
In our illustrative example, \Tool{} will identify \lstinline{'x'} and $\top$ as
arguments and add a rule that will work with arbitrary terms in place of
\lstinline{'x'} and $\top$.

Typically, extending a grammar through examples leads to ambiguity---for a single
uterance-definition pair there may be different possible matchings and thus
several new parsing rules introduced. In previous work~\cite{DBLP:conf/acl/WangGLM17}, this ambiguity was
resolved through user interaction, e.g. showing the user a visualization of
different parses and letting them choose the parse with the intended effect. 
However, it is non-trivial to visualize intermediate steps in a general-purpose
programming language.
Our core insight is that ITPs offer an ideal setting to resolve this ambiguity.
We show that by carefully designing the core grammar and by making use of type
information, the ambiguity can be resolved automatically.
Furthermore, ITPs ``visualize'' individual steps by showing the intermediate
proof state, and rule out wrong tactic definitions by forcing proofs to be
checked by the ITP systems kernel.

\Tool{}'s target audience are trained ITP users who implement decision procedures and simple tactic descriptions in \Tool{}.
\Tool{} allows them to define their own individualized language by defining
easy-to-remember names for individual tactics, or (frequently used) combinations
of tactics.
A tactic language implemented in \Tool{} can then used by non-expert users with
prior programming experience but without necessarily in-depth experience with an
ITP.

Compared to general tactic languages like ssreflect~\cite{ssreflect}, Ltac~\cite{Ltac}, and Eisbach~\cite{Eisbach}, \Tool{}
requires less expert knowledge, at the expense of expressiveness.
Similar to \Tool{}, structured tactic languages like Isar~\cite{Isar1999} have an extended parser.
Extending a language like Isar requires editing the source code, while \Tool{}
supports different tactic languages that can be defined simply by example.
%
While \Tool{} can be used to define a tactic language that is closer to a
natural language, by not requiring the
interface to be entirely natural, \Tool{} is more general and flexible than
systems like Mizar~\cite{Mizar2015} and Naproche-SAD~\cite{NaprocheSAD}.

We implement \Tool{} as a library for the HOL4~\cite{HOL4Tutorial} ITP system, but our
technique is applicable to other theorem provers as well. \Tool{} is fully
compatible with standard HOL4 proofs.
Since all \Tool{} tactics map to standard HOL4 tactics, \Tool{} allows 
exporting a \Tool{} proof into standard HOL4 to maintain portability of proofs.
On the other hand, the learned grammar can be ported as well and can be used, for example,
by a teacher to predefine a domain-specific (tactic) language with \Tool{}, which is
used by learners to ease proofs in a particular area.

We demonstrate \Tool{} on a number of case studies proving theorems involving logic, and
natural and real numbers.
In particular, we show the generality of the naturalized tactics by
reusing them across different proofs, and we show that \Tool{} can be
incrementally used for proofs inside larger code bases.
Finally, by predefining a tactic language with \Tool{}, we develop a tutorial
for the HOL4 theorem prover.

\paragraph{Contributions}
In summary, this paper presents:
\begin{itemize}
  \item an interactive, extensible framework called \Tool{} for writing tactics in
    an ITP by example;
  \item an implementation of this approach inside HOL4 (available at \url{https://github.com/HeikoBecker/Lassie});
  \item a number of case studies and a HOL4 tutorial (available at \url{https://github.com/HeikoBecker/HOL4-Tutorial})\\ showing the effectiveness of \Tool{}.
\end{itemize}

\section{Lassie by Example}\label{sec:overview}


We start by demonstrating \Tool{} on a small example, before explaining our
approach in detail in~\autoref{sec:integration}.

\begin{figure*}
\begin{subfigure}{.49\textwidth}
\begin{lstlisting}[language=HOL4,mathescape=true]


Theorem REAL_INV_LE_AMONO:
  |$\forall$| x y.
    0 < x |$\wedge$| 0 < y |$\Rightarrow$|
    |$\texttt{x}^{-1}$| |$\leq$| |$\texttt{y}^{-1}$| |$\Leftrightarrow$| y |$\leq$| x
Proof
  rpt strip_tac
  \\ `|$\texttt{x}^{-1}$| < |$\texttt{y}^{-1}$| |$\Leftrightarrow$| y < x`
    by (MATCH_MP_TAC REAL_INV_LT_ANTIMONO \\ fs [])
  \\ EQ_TAC
  \\ fs [REAL_LE_LT]
  \\ STRIP_TAC
  \\ fs [REAL_INV_INJ]
QED
$\mbox{}$
\end{lstlisting}
\caption{HOL4 proof}
\label{subfig:real_inv_le_antimono_hol4}
\end{subfigure}
\begin{subfigure}{.49\textwidth}
\begin{lstlisting}[language=HOL4, mathescape=true]
Theorem REAL_INV_LE_AMONO:
  |$\forall$| x y.
    0 < x |$\wedge$| 0 < y |$\Rightarrow$|
    |$\texttt{x}^{-1}$| |$\leq$| |$\texttt{y}^{-1}$| |$\Leftrightarrow$| y |$\leq$| x
Proof
  nltac `
    introduce assumptions.
    show 'inv x < inv y <=> y < x'
      using (use REAL_INV_LT_ANTIMONO
             THEN follows trivially).
    case split.
    simplify with [REAL_LE_LT].
    introduce assumptions.
    simplify with [REAL_INV_INJ]. trivial.`
QED
\end{lstlisting}
\caption{Lassie proof}
\label{subfig:real_inv_le_antimono_lassie}
\end{subfigure}
\caption{HOL4 proof (left) and Lassie proof (right) for theorem \HOLi{REAL_INV_LE_AMONO}}
\Description{}
\end{figure*}
%

\begin{figure}[t]
\begin{thm}
$\forall x\,y, 0 < x \wedge 0 < y \Rightarrow x^{-1} \leq y^{-1} \Leftrightarrow y \leq x$
\end{thm}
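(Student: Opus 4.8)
The plan is to reduce this biconditional to two facts about inversion on the strictly positive reals that should already be in the library: strict antimonotonicity, $x^{-1} < y^{-1} \Leftrightarrow y < x$, and injectivity, $x^{-1} = y^{-1} \Leftrightarrow x = y$. First I would strip the universal quantifiers and move the hypotheses $0 < x$ and $0 < y$ into the context. Then I would establish the strict version $x^{-1} < y^{-1} \Leftrightarrow y < x$ as a local fact; assuming a library lemma \HOLi{REAL_INV_LT_ANTIMONO} of that shape, it follows immediately once the positivity side conditions are discharged, and those are exactly the assumptions just introduced.

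With the strict version available, I would split the goal $x^{-1} \leq y^{-1} \Leftrightarrow y \leq x$ into its two implications and, in each, rewrite $\leq$ via \HOLi{REAL_LE_LT}, which states $a \leq b \Leftrightarrow a < b \vee a = b$. This turns each direction into a case analysis. The strict case is closed directly by the local fact. In the equality case, $x^{-1} = y^{-1}$ yields $x = y$ by injectivity of inversion on nonzero arguments (\HOLi{REAL_INV_INJ}), hence $y \leq x$; the reverse direction's equality case is symmetric.

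The main obstacle is bookkeeping rather than mathematics: one has to keep the positivity hypotheses in scope precisely where \HOLi{REAL_INV_LT_ANTIMONO} and \HOLi{REAL_INV_INJ} need their nonzero / positive side conditions, and apply the \HOLi{REAL_LE_LT} rewrite on the right occurrences so that the subsequent case split really does expose the strict and the equational subgoals. If no ready-made strict-antimonotonicity lemma existed, the fallback would be to prove it by hand from $x^{-1} - y^{-1} = (y - x)/(xy)$ together with $xy > 0$, but I expect it to be available, matching the structure of the HOL4 proof displayed above.
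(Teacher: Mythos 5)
Your proposal is correct and follows essentially the same route as the paper's proof: introduce the positivity hypotheses, establish the strict fact $x^{-1} < y^{-1} \Leftrightarrow y < x$ via \HOLi{REAL_INV_LT_ANTIMONO}, split the biconditional, rewrite $\leq$ with \HOLi{REAL_LE_LT}, and close the strict cases by the local fact and the equality cases via \HOLi{REAL_INV_INJ} (the paper notes the $y = x$ case is immediate by congruence rather than injectivity, a negligible difference). No gaps.
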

\begin{proof}
We show both sides of the implication separately.\\
To show ($x^{-1} \leq y^{-1} \Rightarrow y \leq x$), we do a case split on whether $x^{-1} < y^{-1}$ or $x^{-1} = y^{-1}$.
If $x^{-1} < y^{-1}$, the claim follows because the inverse function is inverse monotonic for $<$.
If $x^{-1} = y^{-1}$, the claim follows from injectivity of the inverse.\\
To show the case ($y \leq x \Rightarrow x^{-1} \leq y^{-1}$), we do a case split on whether $y < x$ or $y = x$.
If $y < x$ the claim follows because the inverse function is inverse monotonic for $<$. If $y = x$, the claim follows trivially.
\end{proof}
\caption{Textbook proof that the inverse function is inverse monotonic for $\leq$}
\label{subfig:real_inv_le_antimono_text}
\Description{}
\end{figure}

For our initial example we choose to prove that the inverse function ($x^{-1}$)
on real numbers
is inverse monotonic for $\leq$.
\autoref{subfig:real_inv_le_antimono_text} shows the formal statement of this
theorem, together with an (informal) proof that one may find in a textbook
(the proof uses a previously proven theorem about $<$).

\paragraph{Proofs in HOL4}
\autoref{subfig:real_inv_le_antimono_hol4} shows the corresponding HOL4 theorem
statement and proof. We can be sure that this proof is correct, because it is machine-checked by
HOL4.
%
HOL4~\cite{HOL4Tutorial} is an ITP system from the HOL-family. It is based on
higher-order logic and all proofs are justified by inference rules from a small,
trusted kernel. Its implementation language is Standard ML (SML), and similar to
other HOL provers like HOL-Light~\cite{hollight}, and
Isabelle/HOL~\cite{isabellehol}, proof steps are described using so-called
tactics that manipulate a goal state until the goal has been derived from true.

When doing a HOL4 proof, one first states the theorem to be proven and starts an
interactive proof.
\autoref{fig:example_hol} shows the example proof statement from~\autoref{subfig:real_inv_le_antimono_hol4}
on the left and the interactive session on the right. To show that
the theorem holds, the user would write a tactic proof at the
place marked with \HOLi{(* Proof *)}, starting with the initial tactic
\HOLi{rpt strip_tac}, sending each tactic to the interactive
session on the right.

A HOL4 tactic implements e.g. a single kernel step, such
as \HOLi{assume_tac thm} which introduces \HOLi{thm} as a new assumption, but a
tactic can also implement more elaborate steps, like
\HOLi{fs}, which implements a stateful simplification algorithm, and
\HOLi{imp_res_tac thm}, resolving \HOLi{thm} with the current assumptions to
derive new facts.
In our example, \HOLi{rpt strip_tac} repeatedly introduces universally quantified
variables and introduces left-hand sides of implications as assumptions.

After each tactic application, the HOL4 session prints the goal state that the
user still needs to show, keeping track of the state of the proof. Once the HOL4
session prints \HOLi{Initial goal proved}, the proof is finished. To make sure
that the proof can be checked by HOL4 when run non-interactively, the separate
tactics used in each step are chained together using the infix-operator
\HOLi{\\\\}. As this operator returns a tactic after taking some additional
inputs, it is called a tactical.

\begin{figure}[t]
\centering
  \begin{subfigure}{.25\textwidth}
    \begin{lstlisting}[language=HOL4, mathescape=true, frame=single,backgroundcolor = \color{light-gray},rulecolor=\color{gray}]
Theorem REAL_INV_LE_AMONO:
  $\forall$ x y.
    0 < x $\wedge$ 0 < y $\Rightarrow$
    (inv x $\leq$ inv y $\Leftrightarrow$ y $\leq$ x)
Proof
  rpt strip_tac

  (* Proof *)

QED
\end{lstlisting}
  \end{subfigure}
  \hspace{0.0027\textwidth}
  \begin{subfigure}{.20\textwidth}
    \begin{lstlisting}[language=HOL4, mathescape=true,frame=single,backgroundcolor = \color{light-gray},rulecolor=\color{gray}]
1 subgoal:
val it =

  0.  0 < x
  1.  0 < y
  ---------------------
  $\text{x}^{-1}$ $\leq$ $\text{y}^{-1}$ $\Leftrightarrow$ y $\leq$ x

  : proof
>
    \end{lstlisting}
  \end{subfigure}
\caption{HOL4 theorem (left) and interactive proof session (right)}\label{fig:example_hol}
\Description{}
\end{figure}


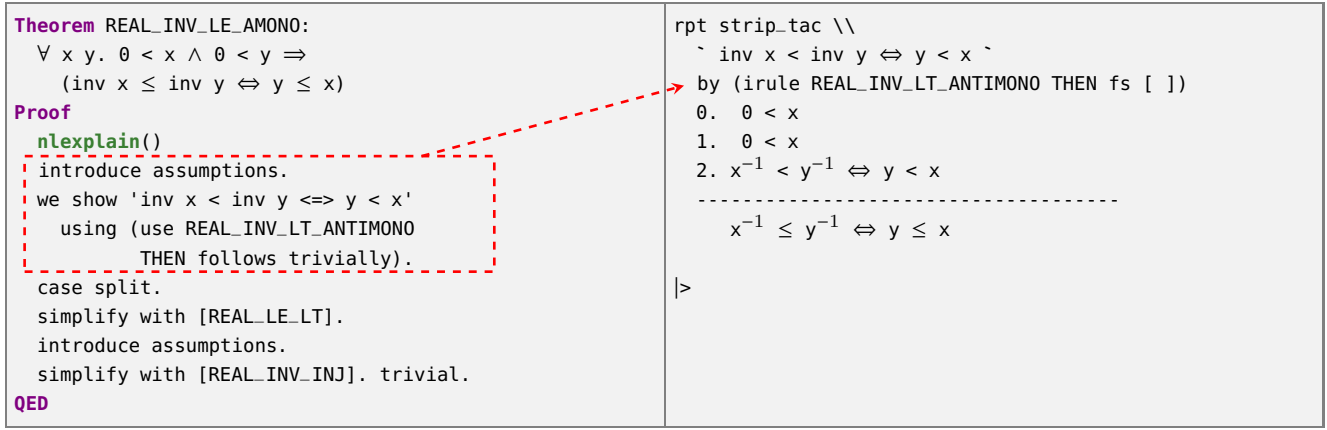
\begin{figure*}[t]
\vspace{2mm}
\centering
  \begin{subfigure}{.48\textwidth}
    \begin{lstlisting}[language=HOL4, mathescape=true, frame=single,
    escapechar=!,backgroundcolor = \color{light-gray},rulecolor=\color{gray}]
Theorem REAL_INV_LE_AMONO:
  $\forall$ x y. 0 < x $\wedge$ 0 < y $\Rightarrow$
    (inv x $\leq$ inv y $\Leftrightarrow$ y $\leq$ x)
Proof
  nlexplain()
 !\tikz[remember picture] \node [] (a) {};!introduce assumptions.           !\tikz[remember picture] \node [] (b) {};!
  we show 'inv x < inv y <=> y < x'
    using (use REAL_INV_LT_ANTIMONO
           THEN follows trivially).
  case split.
  simplify with [REAL_LE_LT].
  introduce assumptions.
  simplify with [REAL_INV_INJ]. trivial.
QED
\end{lstlisting}
  \end{subfigure}
  \hspace{0.0027\textwidth}
  \begin{subfigure}{.48\textwidth}
    \begin{lstlisting}[language=HOL4, mathescape=true,frame=single, escapechar = !,
    backgroundcolor = \color{light-gray},rulecolor=\color{gray}]
rpt strip_tac \\
  ` inv x < inv y $\Leftrightarrow$ y < x `
 !\tikz[remember picture] \node [] (c) {};!by (irule REAL_INV_LT_ANTIMONO THEN fs [ ])
  0.  0 < x
  1.  0 < x
  2. $\text{x}^{-1}$ < $\text{y}^{-1}$ $\Leftrightarrow$ y < x
  -------------------------------------
     $\text{x}^{-1}$ $\leq$ $\text{y}^{-1}$ $\Leftrightarrow$ y $\leq$ x

$|$>



$\mbox{}$
    \end{lstlisting}
  \end{subfigure}
\begin{tikzpicture}[remember picture, overlay,
    every edge/.append style = { ->, thick, >=stealth,
                                  red, dashed, line width = 1pt }]
  \draw [dashed, red, line width = 1pt] (a) + (-0.1, 0.2) rectangle (-11.0,-0.55);
  \draw (b) + (0.0, 0.2) edge (c);
\end{tikzpicture}
\caption{Intermediate proof state using goalTree's and \HOLi{nlexplain}}
\label{fig:example_goalTree}
\Description{}
\end{figure*}

\paragraph{Proofs in \Tool{}}
\autoref{subfig:real_inv_le_antimono_lassie} shows the proof of our theorem
using \Tool{}. This proof follows the same steps as the standard HOL4 proof, but each
tactic is called using a name that we have previously defined in \Tool{} by example.
We chose the \Tool{} tactics to be more descriptive (for us at least), and while
they make the proof slightly more verbose, they also make it easier to follow
for (non-)experts.
Each of our \Tool{} tactics maps to corresponding formal HOL4 tactics, so that the
proof is machine-checked by HOL4 as before, retaining all correctness guarantees.


Unlike existing tactic languages, \Tool{} allows to define custom tactics
\emph{by example} and thus does not require any knowledge in tactic programming.
For instance, for our example proof, we defined a new tactic by
\begin{lstlisting}
def `simplify with [REAL_LE_LT]` `fs [REAL_LE_LT]`;
\end{lstlisting}
\Tool{} automatically generalizes from this example so that we can later use this
tactic with a different argument:
\begin{lstlisting}
simplify with [REAL_INV_INJ]
\end{lstlisting}

To achieve this automated generalization, \Tool{} internally uses an extensible
semantic parser~\cite{berant2013freebase}. That is, \Tool{} tactics are
defined as grammar rules.
\Tool{} initially comes with a relatively small core grammar, supporting commonly
used HOL4 tactics. This grammar is gradually and interactively extended with
additional tactic descriptions by giving example mappings. For instance our
definition above would add the following rule to the grammar:
\begin{lstlisting}
simplify with [THM1, THM2, ...] |$\to$| fs [THM1, THM2, ...]
\end{lstlisting}
Note that this rule allows \lstinline{simplify with} to be called with a list
of theorems, not just a single theorem as in the example given.
This generalization happens completely automatically in the semantic parser and
does not require any programming by the user.

The \Tool{}-defined tactics can be used in a proof using the function
\HOLi{nltac}, that sends tactic descriptions to the semantic parser, which
returns the corresponding HOL4 tactic. Because \HOLi{nltac} has the same return
type as all other standard HOL4 tactics, it can be used as a drop-in replacement
for standard HOL4 tactics, and can be freely combined with other HOL4 tactics in
a proof.

\paragraph{Explaining Proofs with \Tool{}}
\Tool{} also comes with a function \HOLi{nlexplain}. Instead of being a drop-in
replacement, like \HOLi{nltac}, \HOLi{nlexplain} decorates the proof state
with the HOL4 tactic that is internally used to perform the current proof step.
\autoref{fig:example_goalTree} shows an intermediate state when using
\HOLi{nlexplain} to prove our example theorem.
All \Tool{} tactics inside the red dashed box on the left-hand side have been
passed to \HOLi{nlexplain}.
The goal state on the right-hand side shows the current state of the
proof as well as the HOL4 tactic script that has the same effect as the \Tool{} tactics.

We envision \HOLi{nlexplain} to be used for example in a HOL4 tutorial to ease the learning curve when
learning interactive theorem proving. \Tool{} allows a teacher to first define a
custom tactic language that follows the same structure as the HOL4 proof, but
that uses descriptive names and may be thus easier to follow for a novice. In a second step,
one can use \HOLi{nlexplain} to teach the actual underlying HOL4 tactics.

Function \HOLi{nlexplain} can furthermore be used for sharing \Tool{} proofs
without introducing additional dependencies on the semantic parser. While
sharing \Tool{} proof scripts directly is possible, it requires sharing the
state of the semantic parser as well. Alternatively, one can send the \Tool{}
proof to \HOLi{nlexplain} and obtain a HOL4 tactic script that can then be
shared without depending on the semantic parser.

\begin{figure*}[t]
\input{sections/example_grammar}
\caption{Excerpt from Lassie grammar (left) and the database (right), parsing tactics and thm list tactics}
\label{fig:ex_grammar}
\Description{}
\end{figure*}

\paragraph{More Complex Tactics}
While the target user that we had in mind when developing \Tool{} is not an
ITP expert, experts may nonetheless find \Tool{} useful to, e.g., group
commonly used combinations of tactics.
For example, to make the proofs of simple subgoals easier, an expert can define a tactic that
uses different simplification algorithms and an automated decision procedure to
attempt to solve a goal automatically:
  \begin{lstlisting}[language=HOL4]
  def `prove with [ADD_ASSOC]`
    `all_tac THEN ( fs [ ADD_ASSOC ] THEN NO_TAC)
      ORELSE (rw [ ADD_ASSOC ] THEN NO_TAC)
      ORELSE metis_tac [ ADD_ASSOC ]`
  \end{lstlisting}

The HOL4 tactic will first attempt to solve the goal using the simplification
algorithms implemented in tactics \HOLi{fs} and \HOLi{rw}, and if both fail,
it will call into the automated decision procedure \HOLi{metis_tac}, based on
first-order resolution. (Tactical \HOLi{t1 ORELSE t2} applies first tactic \HOLi{t1}, and if \HOLi{t1}
fails, \HOLi{t2} is applied. \HOLi{THEN NO_TAC} makes the
simplification fail if it does not solve the goal.)

The resulting tactic description \HOLi{prove with [THM1, THM2, ...]} is
parametric in the used list of theorems making it applicable in different contexts.

Defined tactic descriptions
are added to the grammar and are as such part of the generalization algorithm.
Thus we can reuse the just defined tactic description to define an even more elaborate version:
  \begin{lstlisting}[language=HOL4]
  def `'T' from [ CONJ_COMM ] `
    `'T' by ( prove with [CONJ_COMM] )`;
  \end{lstlisting}
This tactic description, once generalized by the semantic parser, completely
hides the fact that we may need to call into three different algorithms to prove
a subgoal, while allowing us to enrich our assumptions with arbitrary goals, as
long as they are provable by the underlying HOL4 tactics.


\section{Defining Tactics in \Tool{}}\label{sec:integration}


Existing approaches to tactic languages, like Eisbach~\cite{Eisbach} and
ssreflect~\cite{ssreflect} are implemented as domain-specific languages (DSL), usually
within the theorem prover's implementation language.
In these approaches, defining a new tactic is the same as defining a function in
the implemented DSL.
If a tactic should be generalized over e.g.\ a list of theorems, this
generalization must be performed manually by the user of the tactic language.

In contrast, \Tool{}'s tactics are defined in a grammar that is extended
interactively by example using a semantic parser~\cite{berant2013freebase} that performs
parameter generalization automatically.
We define an initial core grammar (\autoref{subsec:Grammar}) that users can
extend by example (\autoref{subsec:extendingLassie}). Each such defined description
(\Tool{} tactic), maps a description to a (sequence of) HOL4 tactics, which is
then applied to the proof state and checked by the HOL4 kernel.
Note that a \Tool{} user does not directly modify and thus does not have to be
aware of the underlying (core) grammar---the extension happens by example.



\subsection{The Core Grammar}
\label{subsec:Grammar}


The left-hand side of \autoref{fig:ex_grammar} shows a subset of \Tool{}'s core
grammar.
\SEMPRE{ROOT} is the symbol for the root node in the grammar and must always be a valid tactic.
The core grammar is used to parse theorems, tactics, tacticals (of type
\HOLi{thm list -> tactic}) and looks up functions of these types.

Each rule has the form \SEMPRE{left} $\rightarrow$ \SEMPRE {right} $(\lambda x. \ldots)$.
While \SEMPRE{left} $\rightarrow$ \SEMPRE{right} works just as in a standard context free grammar,
the $\lambda$-abstraction, called logical form, is applied to the result of
parsing \SEMPRE{right} using the grammar.
The logical form allows us to manipulate parsing results after they have been
parsed by the grammar, essentially interpreting them within the parser.
In \Tool{} we use it to implement function applications when combining tactics,
and to lookup names in a database.

We have built a core grammar for \Tool{} that supports the most common tactics
and tacticals of HOL4. For instance the core grammar will parse
\HOLi{fs [REAL_INV_INJ]} unambiguously into the equivalent SML code as its logical form.
We think of this core grammar as the starting point for users to define
\Tool{} tactics on top of the HOL4 tactics.

Adding every HOL4 tactic and tactical as a separate terminal
to the grammar would clutter it unnecessarily and make it hard to maintain.
That is why the grammar allows so-called lookup rules that check a dictionary for
elements of predefined sets.
The right-hand side of \autoref{fig:ex_grammar} shows a subset of the database
used for the lookups.
In the grammar in \autoref{fig:ex_grammar}, a tactic can then either be looked
up from the database (second rule), or a tactic can be a combination of a function
of type \HOLi{thm -> tactic} and a theorem (third rule).
We refer to functions of type \HOLi{thm -> tactic} as theorem tactics, as they
take a theorem as input, and return a HOL4 tactic.
Theorem tactics are again looked up from the database, whereas theorems can be
any possible string denoted in the grammar by \SEMPRE{TOKEN}.
In addition to HOL4 tactics and theorem tactics, our core grammar also uses a
combination of rules (not shown in \autoref{fig:ex_grammar}) to support
functions that return a tactic of type
\begin{itemize}
\item \HOLi{thm list -> tactic}
\item \HOLi{tactic -> tactic}
\item \HOLi{term quotation -> tactic}
\item \HOLi{(thm -> tactic) -> tactic}
\item \HOLi{tactic -> tactic -> tactic}
\item \HOLi{term quotation -> (thm -> tactic)}~\HOLi{-> thm -> tactic}
\item \HOLi{term quotation list -> (thm -> tactic)}~\HOLi{->}

    \HOLi{thm -> tactic}
\end{itemize}
These types capture most of the tactics implemented in HOL4, and we add a
subset of 53 commonly used tactics into the database.

\paragraph{Non-Ambiguity}
A common issue in semantic parsing is grammar ambiguity. In \Tool{}, having an
ambiguous grammar is not desirable as it would require users to disambiguate
each ambiguous \Tool{} tactic while proving theorems.
We thus aim to have an unambiguous grammar and achieve this by a careful design
of our core grammar.
By encoding the types of the tactics as non-terminals, our core grammar
acts as a type-checker for our supported subset of HOL4 tactics.
Even after defining custom tactics, the semantic parser will always parse \Tool{} tactics into
the subset it can type check thus  keeping the grammar unambiguous.
During our experiments we have not found a case where extending the grammar
introduced any ambiguity, which reassures this design choice.


\subsection{Extending Lassie with New Definitions}
\label{subsec:extendingLassie}

With our core grammar, \Tool{} can parse the HOL4 tactics we have added to the
grammar into their (equivalent) SML code. We now explain how this grammar can be
interactively extended by example in order to provide custom names for
(sequences of) tactics.

\Tool{}'s tactic learning mechanism relies on a semantic parser.
A semantic parser converts a natural language utterance into a
corresponding (executable) logical form or---due to ambiguity---a ranked list of candidates.
Semantic parsers can be implemented in many ways, e.g., they can be rule-based or learned
from data~\cite{DBLP:journals/cacm/Liang16}.
SEMPRE~\cite{berant2013freebase}, which we use, is a toolkit for developing
semantic parsers for different tasks.
It provides commonly used natural language processing methods, and
different ways of encoding logical forms.

\Tool{}'s semantic parser is implemented on top of the interactive version of
SEMPRE~\cite{DBLP:conf/acl/WangGLM17}.
It starts with a core formal grammar, which can be expanded through interactions
with the user.
Users can add new concepts to the grammar by example using \Tool{}'s library
function \HOLi{def}, which invokes the semantic parser. Each example consists
of a (\emph{utterance}, \emph{definition}) pair, where the utterance is the new
tactic to be defined and the definition is an expression that is already part
of the grammar.
For instance, we can give as example:
\begin{lstlisting}
  def `simplify with REAL_ADD_ASSOC`   (*utterance*)
    `fs [REAL_ADD_ASSOC]`            (*definition*)
\end{lstlisting}
Note that the command demonstrates the new tactic (\HOLi{simplify with}) with a particular
argument (\HOLi{REAL_ADD_ASSOC}), but does not explicitly state what the argument is.

The definition has to already be part of the grammar and thus fully parsable,
otherwise the parser will reject the pair, whereas only some parts of the
utterance may be parsable. That is, the definition needs to be already understood
by the semantic parser, either because it is part of the core grammar or because
it was previously already defined by the user.



The function \HOLi{def} first obtains a logical form for the definition (which
exists since the definition is part of the grammar).
The semantic parser then
induces one or more grammar rules from the utterance-definition pair
and attaches the logical form of the definition to those rules.





The induction of new grammar rules relies on
finding correspondences between parsable parts of the utterance and its definition.
As an example, observe our \HOLi{simplify with} command.
Because \lstinline{REAL_ADD_ASSOC} can be parsed into a category \lstinline{$thm}, the two new production
rules added to the grammar are:
\begin{lstlisting}[escapeinside={(*}{*)}]
$tactic (*$\rightarrow$*)
  simplify with REAL_ADD_ASSOC ((*$\lambda$*) x.fs [REAL_ADD_ASSOC])
$tactic (*$\rightarrow$*)
  simplify with $thm ((*$\lambda$*) thm. fs [thm])
\end{lstlisting}

Based on the second added rule, we can now use the \Tool{} tactic
\HOLi{simplify with} connected to any other description that is parsed as a
\SEMPRE{thm}, because the parser identified
\HOLi{REAL_ADD_ASSOC} as an argument and generalized from our example by learning
the $\lambda$-abstraction over the variable \HOLi{thm}.

Next time the user calls, for instance,
\begin{lstlisting}
  nltac `simplify with REAL_ADD_COMM`
\end{lstlisting}
\Tool{}'s semantic parser will parse this command into the tactic \HOLi{fs
[REAL_ADD_COMM]} using the second added rule.
\section{\Tool{} Design}\label{sec:communication}


Lassie is implemented as a HOL4 library, which can be loaded into a running HOL4 session
with \HOLi{open LassieLib;}. This will start a SEMPRE
process and the library captures its input and output as SML streams.
Whenever \HOLi{nltac} or \HOLi{nlexplain} are run, the input is send to SEMPRE
over the input stream, and if it can be parsed with the currently learned
grammar, SEMPRE writes the resulting HOL4 tactic to the output stream as a
string.
If parsing fails, i.e. SEMPRE does not recognize the description, LassieLib
raises an exception, such that an end-user can define the tactic with a call to
\HOLi{def}.

We want \HOLi{nltac} to act as a drop-in replacement for HOL4 tactics.
Therefore, \HOLi{nltac} must not only be able to parse single tactics, but must
also be able to parse full tactic scripts, performing a proof from start to finish.
During our case-studies, we noticed that SEMPRE was not built for parsing
large strings of text, but rather for smaller examples.
To speed up parsing, we have defined a global constant, \HOLi{LassieSep} which
is used to split input strings of \HOLi{nltac}.
For example, calling
\begin{lstlisting}
  nltac `case split. simplify with [REAL_LE_LT].`
\end{lstlisting}
will lead to two separate calls to the semantic parser:
one for \HOLi{case split} and one for \HOLi{simplify with [REAL_LE_LT]}.
The resulting HOL4 tactics are joined together using the \HOLi{THEN_LT} tactical,
which is a more general version of the tactical \HOLi{\\\\}, as it has an
additional argument for selecting the subgoal to which the given tactic is applied.
When proving a goal interactively, some tactics, like induction, and case splitting,
can lead to multiple subgoals being generated.
We use the \HOLi{THEN_LT} tactical to implement selecting subgoals in
\HOLi{nltac}.


\begin{sloppypar}
There are some differences in how \HOLi{nltac} and \HOLi{nlexplain}
are used.
Function \HOLi{nltac} can be used as a drop-in replacement for HOL4 tactics,
and thus supports selection of subgoals.
In contrast, \HOLi{nlexplain} is meant to be used interactively, and therefore
parses \Tool{} tactics, but does not support selection of subgoals.
Instead, subgoals are proven in order of appearance.
The main purpose of \HOLi{nlexplain} is to show how \Tool{} tactics are
translated back into HOL4 tactics.
To do so, it modifies HOL4's interactive read-eval-print loop (REPL), and thus can only be used interactively,
but not to replace plain HOL4 tactics in proof scripts like \HOLi{nltac}.
\end{sloppypar}



To differentiate between SML expressions and HOL4 expressions,
HOL4 requires HOL4 expressions to be wrapped in quotes (\`{}), but
quotes are also a way of allowing multiline strings in HOL4 proofscripts.
Therefore we choose quotes to denote the start and end of a \Tool{} proofscript,
and use apostrophes (\HOLi{'}) to denote the start and the end
of a HOL4 expression in a \Tool{} proof script.

\Tool{} currently does not support debugging tactic applications.
While an end-user can easily define new tactics by example using the semantic
parser, figuring out the tactics exact behavior, and fixing bugs still requires
the user to manually step through the corresponding HOL4 tactic in an
interactive proof and manually inspecting steps.
We see extending \Tool{} with debugging support as future work.



\subsection{Extending Lassie with New Tactics}
\label{subsec:newTacs}
Our initial core grammar supports only a fixed set of the most commonly used
HOL4 tactics.
However, it is common in ITPs to develop custom tactics on a per-project basis,
possibly including fully blown decision procedures~\cite{FPTaylordecisionprocedure}.
To make sure that users can add their own HOL4 tactics as well as custom
decision procedures to \Tool{}, the library provides the functions
\HOLi{addCustomTactic}, \HOLi{addCustomThmTactic}, and \HOLi{addCustomThmlistTactic}.

The difference between \HOLi{def} and \HOLi{addCustom[*]Tactic} is in where the
elements are added to the semantic parser's grammar.
Function \HOLi{def} uses SEMPRE's generalization algorithm and adds rules
to the grammar that may contain non-terminals (e.g. \lstinline{follows from [ $thms ]}). 
Function \HOLi{addCustomTactic} always adds a new terminal to the
grammar.

We explain \HOLi{addCustomTactic} by example.
Suppose a user wants to reuse an existing
linear decision procedure for real numbers (\HOLi{REAL_ASM_ARITH_TAC}) to
close simple proof goals.
Running
  \lstinline{addCustomTactic REAL_ASM_ARITH_TAC}
adds the new
production rule \SEMPRE{tactic} $\rightarrow$ \HOLi{REAL_ASM_ARITH_TAC} to the
SEMPRE grammar.
Tactic \HOLi{REAL_ASM_ARITH_TAC} can then be used in subsequent calls to \HOLi{def} to
provide \Tool{}-based descriptions, or immediately in \HOLi{nltac} and \HOLi{nlexplain}.

Now that SEMPRE accepts the decision procedure as a valid tactic, we extend
our expert automation tactic from before to try to solve a goal with this
decision procedure too:
  \begin{lstlisting}
  def `prove with [ADD_ASSOC]`
    `all_tac THEN ( fs [ ADD_ASSOC ] THEN NO_TAC)
      ORELSE (rw [ ADD_ASSOC ] THEN NO_TAC)
      ORELSE REAL_ASM_ARITH_TAC
      ORELSE metis_tac [ ADD_ASSOC ]`
  \end{lstlisting}

Functions \HOLi{addCustomThmTactic}, and \HOLi{addCustomThmlistTactic} work
similarly, adding grammar rules for \SEMPRE{thm->tactic} and \SEMPRE{thm list->tactic}.

\subsection{Defining and Loading Libraries}
\label{subsec:libs}

Users can define libraries with their own defined \Tool{} tactics using the
function \HOLi{registerLibrary} which takes as first input a string, giving the
libraries a unique name, and as second input a function of type \HOLi{:unit ->
unit}, where the function should call \HOLi{def} on the definitions to be added,
following~\autoref{subsec:extendingLassie}.
The defined libraries can then be shared and loaded simply by calling the function
\HOLi{loadLibraries}.

We defined libraries for proofs using logic, natural numbers, and real numbers
from our case studies and used these in our HOL4 tutorial  (\autoref{sec:evaluation})




\begin{figure*}[t]
\begin{subfigure}{.49\textwidth}
\begin{lstlisting}[language=HOL4, mathescape=true]
Theorem EUCLID:
  $\forall$ n . $\exists$ p . n < p $\wedge$ prime p
Proof
  CCONTR_TAC \\ fs[]
  \\ `FACT n + 1 $\neq$ 1`
      by rw[FACT_LESS, neq_zero]
  \\ qspec_then `FACT n + 1` assume_tac PRIME_FACTOR
  \\ `$\exists$ q. prime q $\wedge$ q divides (FACT n + 1)` by fs[]
  \\ `q $\leq$ n` by metis_tac[NOT_LESS_EQUAL]
  \\ `0 < q` by metis_tac[PRIME_POS]
  \\ `q divides FACT n`
      by metis_tac [DIVIDES_FACT]
  \\ `q = 1` by metis_tac[DIVIDES_ADDL, DIVIDES_ONE]
  \\ `prime 1` by fs[]
  \\ fs[NOT_PRIME_1]
QED
$\mbox{}$
\end{lstlisting}
\end{subfigure}
\begin{subfigure}{.49\textwidth}
\begin{lstlisting}[language=HOL4, mathescape=true]
Theorem EUCLID: (* Lassie *)
  $\forall$ n . $\exists$ p . n < p $\wedge$ prime p
Proof
  nltac`
    suppose not. simplify.
    we can derive 'FACT n + 1 <> 1'
      from [FACT_LESS, neq_zero].
    thus PRIME_FACTOR for 'FACT n + 1'.
    we further know
      '$\exists$ q. prime q and q divides (FACT n + 1)'.
    show 'q <= n' using [NOT_LESS_EQUAL].
    show '0 < q' using [PRIME_POS] .
    show 'q divides FACT n' using [DIVIDES_FACT].
    show 'q=1' using [DIVIDES_ADDL, DIVIDES_ONE].
    show 'prime 1' using (simplify).
    [NOT_PRIME_1] solves the goal.`
QED
\end{lstlisting}
\end{subfigure}
\caption{HOL4 proof (left) and \Tool{} proof (right) of euclids theorem}
\label{fig:euclid}
\Description{}
\end{figure*}
\section{Case Studies}\label{sec:evaluation}
We evaluate \Tool{} on three case studies and show how it can be used for
developing a HOL4 tutorial.
In the paper, we show only the main theorems for the case studies, but the full
developments can be found in the \Tool{} repository.

\subsection{Case Study: Proving Euclid's Theorem}

First, we prove Euclid's theorem from the HOL4 tutorial~\cite{HOL4Tutorial} that is distributed with the HOL4 theorem prover documentation.
Euclid's theorem states that the prime numbers form an infinite sequence.
Its HOL equivalent states that for any natural number $n$, there exists a
natural number $p$ which is greater than $n$ and a prime number.

To prove the final theorem, shown in~\autoref{fig:euclid}, we
have proven 19 theorems in total.
To prove these theorems, we defined a total of 22 new
tactics using \HOLi{LassieLib.def}.
Some tactics have been used only once, but for example the tactic
\HOLi{[...] solves the goal}, was reused 16 times.

Another example is the tactic \HOLi{thus PRIME_FACTOR for 'FACT n + 1'} which
introduces a specialized version of the theorem \HOLi{PRIME_FACTOR}, proving the
existence of a prime factor for every natural number.
Note how the tactic description can freely mix text descriptions with the
parameters for the underlying tactic.
Similarly, the first step of the HOL4 proof reads \HOLi{CCONTR_TAC}, which
initiates a proof by contradiction.
For an untrained user, figuring out and remembering this name can be cumbersome,
even though the user might know the high-level proof step.
Instead, in \Tool{} we have used the---for us---more intuitive name
\HOLi{suppose not}.

Finally, each sub-step of the HOL4 proof is closed using the tactic \HOLi{metis_tac}.
For an expert user, it is obvious that \HOLi{metis_tac} can be used, because the
expert knows that it performs first order resolution to prove the goal.
In the \Tool{} proof, we hide \HOLi{metis_tac []} in combination with the
simplification tactics \HOLi{fs []} and \HOLi{rw[]} under the description
\HOLi{[] solves the goal}.
To further automate proving simple subgoals, we combine the tactic
\HOLi{[] solves the goal} with our \Tool{} tactic for proving subgoals (\HOLi{show 'T' using (gen_tac)})
by defining \HOLi{show 'T' using [...]} as\\ \HOLi{show 'T' using ([...] solves the goal)}.

\subsection{Case Study: Real and Natural Number Theorems}
Next, we will show how Lassie can be used in more involved
proofs about both real and natural numbers.
As an example, we prove that for any natural number $n$, the sum of the cubes of the
first $n$ natural numbers is the same as the square of the sum.
The Lassie proof of the final theorem is in \autoref{fig:sum_of_cubes_lassie}.

\begin{figure}[t]
\begin{lstlisting}[language=HOL4, mathescape=true]
Theorem sum_of_cubes_is_squared_sum:
   $\forall$ n. sum_of_cubes n = (sum n) pow 2
Proof
  nltac `
    induction on 'n'.
    simplify conclusion with [sum_of_cubes_def, sum_def].
    rewrite with [POW_2, REAL_LDISTRIB, REAL_RDISTRIB,
      REAL_ADD_ASSOC].
    showing
      '&SUC n pow 3 =
       &SUC n * &SUC n + &SUC n * sum n + sum n * &SUC n'
      closes the proof
      because (simplify conclusion with [REAL_EQ_LADD]).
    we know '& SUC n * sum n + sum n * &SUC n =
      2 * (sum n * & SUC n)'.
    rewrite once [<- REAL_ADD_ASSOC].
    rewrite last assumption.
    rewrite with [pow_3, closed_form_sum, real_div,
      REAL_MUL_ASSOC].
    we know '2 * &n * (1 + &n) * inv 2 =
      2 * inv 2 * & n * (1 + &n)'.
    rewrite last assumption.
    simplify conclusion with [REAL_MUL_RINV].
    we show 'n + 1 = SUC n' using (simplify conclusion).
    rewrite last assumption. simplify conclusion.
    we show '2 = (SUC (SUC 0))'
      using (simplify conclusion).
    rewrite last assumption. rewrite last assumption.
    rewrite with [EXP].
    we show 'SUC n = n + 1' using (simplify conclusion).
    rewrite last assumption.
    rewrite with [GSYM REAL_OF_NUM_ADD, pow_3].
    rewrite with [REAL_OF_NUM_ADD, REAL_OF_NUM_MUL,
                  MULT_RIGHT_1, RIGHT_ADD_DISTRIB,
                  LEFT_ADD_DISTRIB, MULT_LEFT_1].
    simplify.`
QED
\end{lstlisting}
\caption{Lassie proof that the sum of the natural numbers from $1$ to $n$ cubed is the same as the square of their sum}
\label{fig:sum_of_cubes_lassie}
\Description{}
\end{figure}

We have proven a total of 5 theorems:
two (real-numbered) binomial laws, the closed form for summing the first $n$
natural numbers, a side lemma on exponentiation, and the main result about
cubing the first $n$ numbers.
All our proofs in this case study have been performed using the HOL4 theory of
real numbers simply for convenience, as we found real number arithmetic easier
for proving theorems that involve subtractions, powers, and divisions.
We defined a total of 42 tactics by example using \HOLi{LassieLib.def} and added
3 custom tactics using \HOLi{LassieLib.addCustomTactic} and
\HOLi{LassieLib.addCustomThmTactic}.
Again, some of the tactics were used only once or twice but our \Tool{}
tactics for rewriting with a theorem (two calls to \HOLi{LassieLib.def} to
support rewriting from left to right, and right to left) are reused 13 times
within the proofs.

This \Tool{} proof shows how it can be extended with custom tactics.
Our restricted core grammar of \Tool{} does not include HOL4's decision
procedure for reals.
Nevertheless, a user may want to provide this tactic as part of some
automation.
Because \Tool{} supports on-the-fly grammar extensions we add the decision
procedure for reals (\HOLi{REAL_ASM_ARITH_TAC}) to the grammar:
\HOLi{addCustomTactic REAL_ASM_ARITH_TAC}.
Having added this tactic, it can be used just like the HOL4 tactics we support
in the base grammar.
Thus we define a \Tool{} tactic using the decision procedure:
\begin{lstlisting}[language=HOL4, mathescape=true]
def `we know 'T'`
  `'T' by (REAL_ASM_ARITH_TAC ORELSE DECIDE_TAC)`
\end{lstlisting}
The semantic parser now automatically generalizes the grammar rule for this tactic, learning the rule
\begin{lstlisting}[language=HOL4, mathescape=true]
$\dollar$tactic |$\to$|
  we know '$\dollar$term'($\lambda$ t.
    't' by (REAL_ASM_ARITH_TAC ORELSE DECIDE_TAC))
\end{lstlisting}
With this, we can use more complicated tactics like
\HOLi{we know '2 * &n * (1 + &n) * inv 2 = 2 * inv 2 * &n * (1 = &n)'}.

In general, combining the extensibility of \Tool{} and the generalization of
SEMPRE allows us to support arbitrary settings where trained experts can
implement domain-specific decision procedures and provide simple tactic
descriptions to novice users that want to use them in a HOL4 proof, essentially
decoupling the automation from its implementation.
Equally, any user can define personalized and more intuitive names for
often-used tactics.

\subsection{Case Study: Naturalizing a Library Proof}
In our final example, we show how Lassie can be integrated into larger developments,
by proving a soundness theorem from a library of FloVer~\cite{Flover}.
FloVer is a verified checker for finite-precision roundoff error bounds implemented in HOL4.
Its HOL4 definitions and proofs span approximately 10000 lines of code
and the interval library is one of the critical components which is used
in most of the soundness proofs.
As the FloVer proofs are performed over real numbers, we reuse the tactic
descriptions from our previous example and do not need to add additional
definitions.
In \autoref{fig:flover_iv_inv} we show that if we have an interval $iv$, and a
real number $a \in iv$, then the inverse of $a$ is contained in the inverse of
$iv$.

\begin{figure}[t]
\begin{lstlisting}[language=HOL4, mathescape=true]
Theorem interval_inversion_valid:
  $\forall$ iv a.
    (SND iv < 0 \/ 0 < FST iv) /\ contained a iv ==>
    contained (inv a) (invertInterval iv)
Proof
  nltac `
  introduce variables.
  case split for 'iv'.
  simplify with [contained_def, invertInterval_def].
  introduce assumptions.
  rewrite once [<- REAL_INV_1OVER].
  Next Goal.
    rewrite once [ <- REAL_LE_NEG].
    we know 'a < 0'. thus 'a <> 0'.
    we know 'r < 0'. thus 'r <> 0'.
    'inv(-a) <= inv (-r) <=> (- r) <= -a' using
      (use REAL_INV_LE_AMONO THEN simplify).
    resolve with REAL_NEG_INV.
    rewrite assumptions.
    follows trivially.
  Next Goal.
    rewrite once [<- REAL_LE_NEG].
    we know 'a < 0'. thus 'a <> 0'. we know 'q <> 0'.
    resolve with REAL_NEG_INV.
    'inv (-q) <= inv (-a) <=> (-a) <= (-q)' using
      (use REAL_INV_LE_AMONO THEN simplify
       THEN trivial).
    rewrite assumptions. follows trivially.
  Next Goal.
    rewrite with [<- REAL_INV_1OVER].
    'inv r <= inv a <=> a <= r' using
      (use REAL_INV_LE_AMONO THEN trivial).
    follows trivially.
  Next Goal.
    rewrite with [<- REAL_INV_1OVER].
    'inv a <= inv q <=> q <= a' using
      (use REAL_INV_LE_AMONO THEN trivial).
    follows trivially.`
QED
\end{lstlisting}
\caption{Soundness of FloVer's interval inversion in Lassie}
\label{fig:flover_iv_inv}
\Description{}
\end{figure}

This example shows that Lassie's tactic definitions are expressive enough to
build libraries of common tactic descriptions that can be shared between
projects. 

\begin{figure*}[t]
  \vspace{2mm}
\centering
  \begin{subfigure}{.48\textwidth}
    \begin{lstlisting}[language=HOL4, mathescape=true, frame=single,escapechar=!,backgroundcolor = \color{light-gray},rulecolor=\color{gray}]
Definition sum_def:
  sum (n:num) = if n = 0 then 0 else sum (n-1) + n
End

Theorem closed_form_sum:
  $\forall$ n. sumEq n = n * (n + 1) DIV 2
Proof
  nlexplain()
 !\tikz[remember picture] \node [] (a2) {};!Induction on 'n'.         !\tikz[remember picture] \node [] (b2) {};!
  simplify with [sumEq_def].`
  simplify with [sumEq_def, GSYM ADD_DIV_ADD_DIV].
  '2 * SUC n + n * (n + 1) = SUC n * (SUC n + 1)'
    suffices to show the goal.
  show 'SUC n * (SUC n + 1) =
        (SUC n + 1) + n * (SUC n + 1)'
    using (simplify with [MULT_CLAUSES]).
  simplify.
  show 'n * (n + 1) = SUC n * n'
    using (trivial using [MULT_CLAUSES,MULT_SYM]).
  rewrite assumptions. simplify.
QED
\end{lstlisting}
  \end{subfigure}
  \hspace{0.0027\textwidth}
  \begin{subfigure}{.48\textwidth}
    \begin{lstlisting}[language=HOL4, mathescape=true,frame=single,escapechar=!,backgroundcolor = \color{light-gray},rulecolor=\color{gray}]
Induct on ` n `
  >- ( fs [ sum_def ])
  >- ( fs [ sum_def, GSYM ADD_DIV_ADD_DIV ] \\
 !\tikz[remember picture] \node [] (c2) {};!    `2 * SUC n + n * (n + 1) = SUC n * (SUC n + 1)`
        suffices_by (fs [ ]) \\
  0. sum n = n * (n + 1 DIV 2)
  ----------------------------
  2 * SUC n + n * (n + 1) = SUC n * (SUC n + 1)

$|$>










$\mbox{}$
    \end{lstlisting}
  \end{subfigure}
\begin{tikzpicture}[remember picture, overlay,
    every edge/.append style = { ->, thick, >=stealth,
                                  red, dashed, line width = 1pt }]
  \draw [dashed, red, line width = 1pt] (a2) + (-0.1, 0.2) rectangle (-10.0,-0.75);
  \draw (b2) + (0.0, 0.2) edge (c2);
\end{tikzpicture}
\caption{Intermediate state of \HOLi{nlexplain} in our tutorial}\label{fig:nlexplain_tutorial}
\Description{}
\end{figure*}


\subsection{HOL4 Tutorial}\label{sec:hol4_tutorial}

We have used \Tool{} to write a new tutorial for HOL4
 with the goal of
decoupling the learning of the basic structure of formal proofs from the
particular syntax and tactic names of HOL4, and by this easing the learning
curve.
Our tutorial is based on the existing HOL4 tutorial~\cite{HOL4Tutorial} and the
HOL4 emacs interaction guide.

First, the new HOL4 user uses \HOLi{nltac} and the \Tool{} tactics
that we defined for our three case studies (i.e.\ loads them as libraries) to do
the proofs. He or she can thus learn the syntax of theorems and definitions,
as well as structure of proofs without having to also learn the often
unintuitive tactic names of the proofs.
For example, we show the proof of the closed form for summing the first $n$
natural numbers from our tutorial in \autoref{fig:gauss}.
The example proof shows \Tool{} tactics that abstract from the tactic, but not the theorem names.
\Tool{} has limited support for defining descriptions of theorems similar to how \Tool{} tactics are defined
which could be used when developing individual languages.

In the second step, the new HOL4 user is introduced to the HOL4 tactics using
\HOLi{nlexplain}. For instance, they can step through the proof and see the
HOL4 tactics underlying each \Tool{} tactic.
We show an example in \autoref{fig:nlexplain_tutorial}.
The left-hand side shows the HOL4 proof state obtained by applying \Tool{}
tactics with \HOLi{nlexplain},
and the right-hand side the modified HOL4 REPL with the current proof goal and a
partial HOL4 tactic script.
The red dashed box on the left-hand side marks all \Tool{} tactics that have been
passed to \HOLi{nlexplain}.

Our tutorial is split into six separate parts.
We start by explaining how HOL4 (and \Tool{}) are installed and configured on a
computer such that the tutorial can be followed interactively.
Next, we explain how one interacts with HOL4 in an interactive session.
The first technical section uses the proof from \autoref{fig:gauss} as a first
example of an interactive HOL4 proof, using only \HOLi{nltac} to perform proofs.
Having introduced the reader to the basics of interactive proofs in HOL4, we
show how a simple library of proofs can be developed.
The library is a re-implementation of our first case study, and hence follows
the structure of the original HOL4 tutorial.
It spans a total of two definitions, and 13 theorems.
For each of the theorems we show a proof using \HOLi{nltac}.
Only after these introductory sections, where a user will have already gained an
intuition both about how one interacts with the HOL4 REPL, and how proofs are stored
in reusable theories, the next section introduces \HOLi{nlexplain} and explains
how HOL4 proofs are performed with plain HOL4 tactics.
Finally, the tutorial concludes with some helpful tips and tricks that we have
collected.

We defined the tutorial using definitions that we personally found intuitive.
However, \Tool{}'s ability to define tactics by example allows each teacher to
define their own individual language in a straightforward way.
\begin{figure}[t]
\begin{lstlisting}[language=HOL4, mathescape=true]
Theorem closed_form_sum:
  $\forall$ n. sum n = (n * (n + 1)) DIV 2
Proof
  nltac`
   Induction on 'n'.
   Goal 'sum 0 = 0 * (0 + 1) DIV 2'.
     simplify.
   End.
   Goal 'sum (SUC n) = SUC n * (SUC n + 1) DIV 2'.
     use [sum_def, GSYM ADD_DIV_ADD_DIV] to simplify.
     '2 * SUC n + n * (n + 1) = SUC n * (SUC n + 1)'
       suffices to show the goal.
     show 'SUC n * (SUC n + 1) =
          (SUC n + 1) + n * (SUC n + 1)'
       using (simplify with [MULT_CLAUSES]).
     simplify.
     show 'n * (n + 1) = SUC n * n'
       using (trivial using [MULT_CLAUSES, MULT_SYM]).
     '2 * SUC n = SUC n + SUC n' follows trivially.
     'n * (SUC n + 1) = SUC n * n + n' follows trivially.
     rewrite assumptions. simplify.
   End.`
QED
\end{lstlisting}
\caption{Example proof of the closed form for summing $n$ numbers using \Tool{} in our HOL4 tutorial}\label{fig:gauss}
\Description{}
\end{figure}
\section{Related Work}\label{sec:related}

In this section, we review approaches designed to ease
the user burden when writing proofs in an ITP.

\paragraph{Hammers}
So-called ``hammers'' use automated theorem provers (ATP) to discharge
proof obligations by translating a proof goal into the logic of
an ATP and a proof back into the logic of the interactive prover. Examples are
Sledgehammer~\cite{Sledgehammer2007} for Isabelle,
HolyHammer~\cite{HolyHammer2014} for HOL4, and a hammer for
Coq~\cite{czajka2018hammer}.
A general overview is given in the survey paper by Blanchette
et al.~\cite{HammersSurvey2016}.
Some of these use learning to predict which premises are needed to be sent to
the ATP, in order not to overwhelm the prover.
In contrast to Lassie, the main focus of such hammers is not to make the
proofs more accessible but to solve simple proof obligations using a
push-button method.
As Lassie is open to adding custom decision procedures we think that
integrating a hammer with Lassie could provide for even richer and easier to
define 
tactic languages by automating simple proofs.

\paragraph{Learning-based}
While hammers try to automate the proof with the help automated theorem provers,
other systems use statistical methods to recommend tactics to the end user to
finish a proof.
DeepHOL~\cite{DeepHOL2019} learns a neural network that, given a proof goal,
predicts a potential next tactic in HOL Light.
GamePad~\cite{GamePad2019} and the work by Yang et al.~\cite{yang2019learning}
similarly use machine learning to predict tactics for Coq.
TacticToe~\cite{TacticToe2017} uses A* search, guided by previous tactic-level
proofs, to predict tactics in HOL4.

\sloppy

\paragraph{Programming Language-based}
Languages like Eisbach~\cite{Eisbach}, Ltac~\cite{Ltac},
 Ltac2~\cite{Ltac2} and Mtac2~\cite{Mtac2} use rigorous programming language
foundations to give more
control to expert users when writing tactics.
Eisbach and Ltac are tactic languages similar to the one of HOL4.
Mtac2 formalizes ``Coq in Coq'' allowing to define tactics as Coq programs,
whereas Ltac2 is a strongly typed language for writing Coq tactics.
The tactic language of the Lean theorem prover~\cite{de2015lean} additionally
implements equational reasoning on top of its tactics, which allows for more
textbook-like proofs.
Recently, the Lean theorem prover has also been extended with a hygienic
macro system~\cite{ullrich2020beyond}.
A core contribution of their work is excluding unintentional capturing in
tactic programming, thus making tactic programming more robust.
In \Tool{} we did not experience any hygiene issues as the definition by example
relies on the semantic parser to do the generalization and as such keeps variable
levels separate.
Using any of the languages above requires all the desired generality to be
stated explicit in the tactic definition, usually in the form of function
definitions.
In contrast, \Tool{}'s definition by example makes it easier to define
new tactics and generalizes automatically.


\fussy

\paragraph{Natural Language Interfaces}
Several systems provide an interface to a theorem prover that is as close
as possible to natural language.
Languages like Isar~\cite{Isar1999}, Mizar~\cite{Mizar2015}, and the work by
Corbineau~\cite{corbineau2007} follow a similar approach as Lassie by having an
extended parser.
Their supported naturalized proof descriptions are fixed to the authors style of
declarative proofs and extending or changing these would required editing the
tool code.
In contrast, \Tool{} is extensible enough to support different tactic languages
that can coexist without interferring if not loaded simultaneously.

The Naproche system~\cite{NaprocheSAD} provides a controlled natural language,
which maps natural language utterances into first-order logic proof obligations,
to be checked by an (automated) theorem prover
(e.g. E Prover~\cite{EProver2013}).
The extensions to Alfa by Hallgren et al.~\cite{Alfa} also use natural language processing
technology to extend the Alfa proof editor with a more natural language.
The book by Ganesalingam~\cite{Ganesalingam2013} gives a comprehensive explanation of the relation
between natural language and mathematics.
Similarly, Ranta et al.~\cite{Ranta11} provide more sophisticated linguistic
techniques to translate between natural language and predicate logic.
An orthogonal approach to the above is presented in the work by Coscoy
et al.~\cite{CoqExplain}.
Instead of translating from natural language to tactics, they provide a
translation from Coq proof terms to natural language.
The main goal of these systems is to provide an interface that supports as
much natural language as possible.
A major limitation, however, is that their grammars are fixed, i.e. only the
naturalized tactics implemented by the authors is available.
Our work does not strive to be a full natural language interface, and in turn
provides an extensible grammar, which adapts to different users and proofs.


\section{Conclusion}\label{sec:conc}

We have presented the \Tool{} tactic language framework for the HOL4 theorem
prover.
Using a semantic parser with an extensible grammar, \Tool{} learns
individualized tactics from user-provided examples.
Our example case studies show that these learned tactics can be easily
reused across different proofs and can ease both the writing and reading
of HOL4 proofs by providing a more intuitive, personalized interface to HOL4's tactics.



\begin{acks}
  The authors would like to thank Magnus Myreen, Zachary Tatlock, and the anonymous reviewers of ITP 2020 and CPP 2021 for providing
  feedback on \Tool{} and (initial) drafts of the paper.
  Gavran and Majumdar were supported in part by
  the DFG project 389792660 TRR 248--CPEC
  and by the European Research Council under the
  Grant Agreement 610150 (ERC Synergy Grant ImPACT).
\end{acks}

\bibliographystyle{ACM-Reference-Format}
\bibliography{biblio_clean}
\end{document}